\definecolor{darkblue}{rgb}{0.,0.,0.4}
\definecolor{darkred}{rgb}{0.5,0.,0.}
\newtheorem{thm}{Theorem}
\newtheorem*{thm*}{Theorem}
\newtheorem{lem}[thm]{Lemma}
\newtheorem{prop}[thm]{Proposition}
\theoremstyle{definition}
\newtheorem{observation}{Observation}
\newtheorem{dfn}[thm]{Definition}
\theoremstyle{plain}
\newtheorem{prob}{Problem}
\newtheorem*{rep@theorem}{\rep@title}
\newcommand{\newreptheorem}[2]{%
\newenvironment{rep#1}[1]{%
 \def\rep@title{#2 \ref{##1} (restatement)}%
 \begin{rep@theorem}}%
 {\end{rep@theorem}}}
\newcommand{\nc}{\newcommand}
\nc{\rnc}{\renewcommand}
\nc\eps{\epsilon}
\nc\bbC{\mathbb{C}}
\nc\bbF{\mathbb{F}}
\nc\bbM{\mathbb{M}}
\nc\bbN{\mathbb{N}}
\nc\bbR{\mathbb{R}}
\nc\bbS{\mathbb{S}}
\nc\bbZ{\mathbb{Z}}
\nc\bp{\mathbf{p}}
\nc\bq{\mathbf{q}}
\nc\benum{\begin{enumerate}}
\nc\eenum{\end{enumerate}}
\nc\bit{\begin{itemize}}
\nc\eit{\end{itemize}}
\nc{\todo}[1]{\textcolor{red}{todo: #1}}
\nc{\Anote}[1]{\textcolor{red}{Aram note: #1}}
\nc{\Ynote}[1]{\textcolor{red}{Nengkun note: #1}}
\nc\cA{\mathcal{A}}
\nc\cB{\mathcal{B}}
\nc\cC{\mathcal{C}}
\nc\cD{\mathcal{D}}
\nc\cE{\mathcal{E}}
\nc\cF{\mathcal{F}}
\nc\cG{\mathcal{G}}
\nc\cH{\mathcal{H}}
\nc\cI{\mathcal{I}}
\nc\cJ{\mathcal{J}}
\nc\cK{\mathcal{K}}
\nc\cL{\mathcal{L}}
\nc\cM{\mathcal{M}}
\nc\cN{\mathcal{N}}
\nc\cO{\mathcal{O}}
\nc\cP{\mathcal{P}}
\nc\cQ{\mathcal{Q}}
\nc\cR{\mathcal{R}}
\nc\cS{\mathcal{S}}
\nc\cT{\mathcal{T}}
\nc\cU{\mathcal{U}}
\nc\cV{\mathcal{V}}
\nc\cW{\mathcal{W}}
\nc\cX{\mathcal{X}}
\nc\cY{\mathcal{Y}}
\nc\cZ{\mathcal{Z}}
\DeclareMathOperator{\tr}{tr}
\def\be#1\ee{\begin{equation}#1\end{equation}}
\def\bea#1\eea{\begin{eqnarray}#1\end{eqnarray}}
\def\beas#1\eeas{\begin{eqnarray*}#1\end{eqnarray*}}
\def\ba#1\ea{\begin{align}#1\end{align}}
\def\bas#1\eas{\begin{align*}#1\end{align*}}
\def\bpm#1\epm{\begin{pmatrix}#1\end{pmatrix}}
\rnc\L{\left}
\nc\R{\right}
\nc\ra{\rightarrow}
\nc\ot{\otimes}
\begin{document}

\title{Sample efficient tomography via Pauli Measurements}

\author{Nengkun Yu}

\affiliation{Centre for Quantum Software and Information, Faculty of Engineering and Information Technology, University of Technology, Sydney, NSW 2007, Australia}

\date{\today}


\begin{abstract}
Pauli Measurements are the most important measurements in both theoretical and experimental aspects of quantum information science. In this paper, we explore the power of Pauli measurements in the state tomography related problems. Firstly, we show that the \textit{quantum state tomography} problem of $n$-qubit system can be accomplished with ${\mathcal{O}}(\frac{10^n}{\epsilon^2})$ copies of the unknown state using Pauli measurements. As a direct application, we studied the \textit{quantum overlapping tomography} problem introduced by Cotler and Wilczek in Ref.~\cite{Cotler_2020}. We show that the sample complexity is $\mathcal{O}(\frac{10^k\cdot\log({{n}\choose{k}}/\delta))}{\epsilon^{2}})$ for quantum overlapping tomography of $k$-qubit reduced density matrices among $n$ is quantum system, where $1-\delta$ is the confidential level, and $\epsilon$ is the trace distance error. This can be achieved using Pauli measurements. Moreover, we prove that $\Omega(\frac{\log(n/\delta)}{\epsilon^{2}})$ copies are needed. In other words, for constant $k$, joint, highly entangled, measurements are not asymptotically more efficient than Pauli measurements. 

\end{abstract}

\pacs{03.65.Wj, 03.67.-a}

\maketitle
\section{Introduction}
It is increasingly important to understand how the cost scales of learning useful information as the experiments can control larger and larger quantum systems.
Quantum state tomography refers to a procedure of reconstructing the
  density matrix of a quantum state from various measurements on
  multiple copies of the state. This fundamental task is crucial for
  quantum information experiments and theoretically goes back at least
  to the work of Helstrom, Holevo, and others from around 1970.
  
In \textit{quantum state tomography}, one is given $k$ copies of an $n$-qubit quantum
state $\rho$ and is required to output a classical description of
density matrix $\hat{\rho}$ close to $\rho$ by performing quantum
measurements on $\rho^{\otimes k}$. Broadly speaking, there are three
categories of measurements: one consists of joint (entangled)
measurements on $\rho^{\otimes k}$, as in \cite{BBMR04,Keyl06,GJK08}.
In \cite{HHJ+16,OW16,OW17}, the authors showed that the optimal scaling of the sample number $k$ as a
function of trace distance goal $\eps$ is $n \propto \frac{4^n}{\eps^2}$. The second category
consists of measurements on each copy of the state $\rho$,
whose results are to be combined to reconstruct the
density matrix, as in \cite{FlammiaGrossLiuEtAl2012}. Ref.~\cite{KRT14} 
showed that if one can perform many-outcome measurements, 
tomography is possible using $n \propto 8^n / \delta^2$ copies, and this is optimal if the measurements on every copy are independent as showed in \cite{HHJ+16}.
The third category local measurements consists of measurements on each qubit of each copy of the state $\rho$, whose results are to be combined to reconstruct the
density matrix \cite{Gu__2020}. Generally, joint measurements over several copies of the state can usually achieve lower sampling rates \cite{HHJ+16,OW16,OW17}, but are much more challenging to implement. On the other hand, Pauli measurements are experimental friendly, therefore, extremely important both theoretically and experimentally. 
If one is restricted to use Pauli measurements, Flammia, Gross, Liu and Eisert observed that the quantum state tomography can be accomplished using ${\mathcal{O}}(\frac{n\cdot 16^n}{\epsilon^2})$ copies in \cite{Flammia_2012}. This was improved the bound to ${\mathcal{O}}(\frac{n\cdot 12^n}{\epsilon^2})$ in  \cite{Gu__2020}.

In \cite{Cotler_2020}, Cotler and Wilczek introduced the problem called \textit{quantum overlapping tomography} problem. The goal is to output the classical description of all $k$-qubit reduced
density matrix of an $n$-qubit system. This problem is also of great importance in quantum information science, because many important physical quantities, for instance, energy and entropy, depend on very small parts of the whole system only, in other words, depends on the set of local reduced density matrices. By using the perfect hash families, they showed that one only needs to use $e^{\mathcal{O}(k)}\log^2 n$ rounds of parallel measurements to achieve this goal.
Ref.~\cite{Garc_a_P_rez_2020} proposed a measurement scheme to perform two-qubit tomography of all pairs. Later Ref.~\cite{evans2019scalable} provided an algorithm to estimate the expectation value of $m$ Pauli operators with weight $\leq k$ using $\mathcal{O}(3^k \log m)$ measurements for small $k$. All of these bounds can be achieved using 
Pauli measurements. 

\subsection{Our results}
In this paper, we study the power of Pauli measurements, the most important class of measurements, in two tomography related problems.

In the first part of this paper, we revisit the sample complexity of \textit{quantum state tomography} problem,
\begin{prob} \label{tomography}
Given an unknown $n$-qubit quantum mixed state $\rho$, 
the goal is to output density matrices $\sigma$ such that 
\begin{align}
||\rho-\sigma||_1<\epsilon
\end{align}
for a given $\epsilon>0$.
How many copies of $\rho$ are needed to achieve this goal, with high probability? 
\end{prob}
The sample complexity of this problem is
nearly resolved in the general joint measurement setting and independent measurement setting \cite{HHJ+16}. However, it is still unclear in the local measurement, in particular, the Pauli measurement setting. We show that
\begin{thm}
 The \textit{quantum state tomography} problem can be solved using $\mathcal{O}(\frac{10^n\log\frac{1}{\delta}}{\epsilon^2})$ copies of $\rho$ via Pauli measurement on each qubit to success with probability at least $1-\delta$.
 \end{thm}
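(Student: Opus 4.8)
The plan is to express $\rho$ in the Pauli basis as $\rho = 2^{-n}\sum_{P}\alpha_P P$ with $\alpha_P = \tr(\rho P)$, where $P$ ranges over the $4^n$ tensor products of $\{I,X,Y,Z\}$, and to estimate the coefficients $\alpha_P$ directly from local Pauli measurements. The key observation is that a single local measurement setting $b=(b_1,\dots,b_n)\in\{X,Y,Z\}^n$ (measure qubit $i$ in the eigenbasis of $b_i$) simultaneously yields an unbiased estimate of $\alpha_P$ for every $P$ whose non-identity factors agree with $b$: from the $\pm1$ outcomes $s=(s_1,\dots,s_n)$ the product $\prod_{i\in\mathrm{supp}(P)}s_i$ has mean $\alpha_P$ and variance at most $1$. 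I would allocate the copy budget uniformly over all $3^n$ settings, so that a weight-$k$ Pauli is covered by $3^{n-k}$ settings and hence estimated from a $3^{-k}$ fraction of the shots.

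To convert coefficient errors into a trace-distance guarantee, I would use $\|\rho-\hat\rho\|_1\le 2^{n/2}\|\rho-\hat\rho\|_2$ (Cauchy--Schwarz on the $2^n$ eigenvalues) together with the Hilbert--Schmidt identity $\|\rho-\hat\rho\|_2^2 = 2^{-n}\sum_P(\alpha_P-\hat\alpha_P)^2$, which collapses to $\|\rho-\hat\rho\|_1 \le (\sum_P(\alpha_P-\hat\alpha_P)^2)^{1/2}$. Thus it suffices to drive the total squared coefficient error below $\epsilon^2$. Taking expectations and using that each weight-$k$ estimator has variance at most $3^k/M$ for a total budget of $M$ copies, the main computation is the sum $\sum_{k=0}^n\binom{n}{k}3^k\cdot\frac{3^k}{M} = \frac{(1+9)^n}{M} = \frac{10^n}{M}$, where $\binom{n}{k}3^k$ counts the weight-$k$ Paulis. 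Hence $M=O(10^n/\epsilon^2)$ makes the expected squared error a small constant multiple of $\epsilon^2$, and this is exactly where the base $10$ enters.

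The delicate part is upgrading this expectation bound to a high-probability statement without reintroducing a factor of $n$ (which a naive coefficient-wise concentration plus union bound over $4^n$ Paulis would cost, and which accounts for the extra $n$ in the earlier $n\cdot 12^n$ bound). I would instead apply Markov's inequality directly to the single nonnegative quantity $\sum_P(\alpha_P-\hat\alpha_P)^2$ to obtain success probability at least $2/3$ from $M=O(10^n/\epsilon^2)$ copies, and then boost to $1-\delta$ by repeating the whole estimator $T=O(\log(1/\delta))$ times and selecting a geometric median: output any candidate $\hat\rho_j$ lying within $2\epsilon$ in trace distance of a strict majority of the others. Since a majority of candidates are within $\epsilon$ of $\rho$ with high probability, such a $j$ exists and, because two majorities intersect, is necessarily within $3\epsilon$ of $\rho$; rescaling $\epsilon$ absorbs the constant. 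A final minor step is to project the (possibly non-positive) estimate onto the set of density matrices, which is a contraction toward the valid state $\rho$ and therefore cannot increase the error. Multiplying the per-round budget by $T$ yields the claimed $O(10^n\log(1/\delta)/\epsilon^2)$.
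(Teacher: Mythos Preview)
Your proposal is correct and shares the paper's core structure: the same estimator (uniform allocation over the $3^n$ full-weight settings, reusing each setting for all compatible Paulis), the same variance bookkeeping, and the identical binomial identity $\sum_{k}\binom{n}{k}3^k\cdot 3^k = 10^n$ that produces the base $10$. The difference lies solely in how you convert the expectation bound into a high-probability guarantee.

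The paper does not use Markov plus median-of-means. Instead it treats $f(X_1,\dots,X_{m\cdot 3^n}) = \|\rho-\sigma\|_2$ as a single function of all the raw measurement outcomes and applies McDiarmid's bounded-differences inequality directly. The bounded-difference constant is obtained by noting that flipping one outcome $X_i$ (associated to a fixed setting $P\in\{X,Y,Z\}^{\otimes n}$) perturbs only the $2^n$ coefficients $\mu_Q$ for $Q$ dominated by $P$, each by at most $2$; summing $\|\cdot\|_2^2$ over these $Q$ gives $c_i \le \frac{2\cdot 5^{n/2}}{m\cdot 3^n}$, and then $\sum_i c_i^2 = \frac{4\cdot 5^n}{m\cdot 3^n}$ yields the $\log(1/\delta)$ factor in one shot without any repetition or aggregation step. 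What the paper's route buys is a single-round protocol (no geometric median, no projection step needed for the argument), at the cost of the somewhat delicate bounded-difference computation. Your Markov-plus-boosting route is more modular and avoids that computation entirely, at the cost of running $O(\log(1/\delta))$ independent rounds and performing the median selection; both land at the same $\mathcal{O}(10^n\log(1/\delta)/\epsilon^2)$ total.
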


Secondly, we study the sample complexity of the \textit{quantum overlapping tomography} problem:  ``How
  many copies of states are necessary and sufficient to reconstruct all the $k$-reduced density matrices of unknown $n$-qubit state $\rho$, to within additive error $\epsilon$, for constant $k$?'' More precisely, the formula of this problem is
\begin{prob} \label{localtomography}
Given an unknown $n$-qubit quantum mixed state $\rho_{1,2,\dots,n}$, and $1\leq k\leq n$,
the goal of quantum overlapping tomography is to output density matrices $\sigma_S$ for all $S\subseteq \{1,2,\dots,n\}$ with $|S|\leq k$ such that 
\begin{align}
||\rho_S-\sigma_S||_1<\epsilon
\end{align}
for a given $\epsilon>0$, where $\rho_S$ denotes the reduced density matrix of $\rho_{1,2,\dots,n}$ on $S$. 

If one only cares about $M$ density matrices $\sigma_{S_1},\sigma_{S_2},\cdots,\sigma_{S_M}$ with $|S_i|\leq k$, this problem is called partial quantum overlapping tomography.

How many copies of $\rho$ are needed to achieve this goal, with probability at least $1-\delta$ for $\delta>0$? 
\end{prob}

We show that,
\begin{thm} \label{localtomographyt}
The sample complexity of quantum overlapping tomography of $n$-qubit system is $\Theta(\epsilon^{-2}\cdot\log (n/\delta))$ for constant $k$ to succeed with probability at least $1-\delta$, even using general joint measurement schemes.
General quantum overlapping tomography problem for $1\leq k\leq n$ can be accomplished by performing Pauli measurements on $\mathcal{O}(\epsilon^{-2}\cdot 10^k\cdot \log({{n}\choose{k}}/\delta))$ copies. Moreover, for partial quantum overlapping tomography with $M$ outcomes, the sample complexity is $\mathcal{O}(\epsilon^{-2}\cdot 10^k\cdot \log(2M/\delta))$ using Pauli measurements.
\end{thm}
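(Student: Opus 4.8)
The plan is to treat the three claims separately: the two upper bounds follow by a reduction to the single-state Pauli-tomography guarantee of Theorem~1 together with a union bound, while the matching lower bound is proved by reducing to simultaneous binary discrimination on a product hard instance.

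For the upper bounds, the key observation is that the measurement scheme of Theorem~1 assigns to each copy an independent, uniformly random Pauli basis $\{X,Y,Z\}$ on every qubit and records the $\pm1$ outcomes. Fix any $S$ with $|S|\le k$ and discard all outcomes outside $S$: what remains is, for each copy, an independent uniformly random Pauli setting on the $|S|$ qubits of $S$ together with its outcomes, which is exactly the data that Theorem~1 would produce when run on the $|S|$-qubit state $\rho_S$. Hence feeding the restricted data into the reconstruction of Theorem~1 yields $\sigma_S$ with $\|\rho_S-\sigma_S\|_1<\epsilon$ with failure probability at most $\delta'$, provided the number of copies is $\mathcal{O}(\epsilon^{-2}10^{k}\log(1/\delta'))$. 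Crucially the same physical measurements serve every subset simultaneously, so I would run one batch of $T$ random Pauli measurements and apply the reconstruction once per $S$. A union bound over the $\sum_{j\le k}\binom{n}{j}=\mathcal{O}(\binom{n}{k})$ subsets with $\delta'=\delta/\binom{n}{k}$ gives total failure at most $\delta$ and $T=\mathcal{O}(\epsilon^{-2}10^{k}\log(\binom{n}{k}/\delta))$; for the partial version one unions only over the $M$ prescribed sets with $\delta'=\delta/(2M)$, giving $\mathcal{O}(\epsilon^{-2}10^{k}\log(2M/\delta))$. The union bound needs no independence between the (correlated) estimates $\sigma_S$, so this step is routine.

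For the lower bound it suffices to treat $k=1$, since recovering all single-qubit marginals is already required. I would take product instances $\rho_b=\bigotimes_{i=1}^n\rho^{(b_i)}$ indexed by a uniformly random string $b\in\{0,1\}^n$, with $\rho^{(c)}=\tfrac12(I+(-1)^c\,3\epsilon\,Z)$; the two single-qubit hypotheses are at trace distance $3\epsilon>2\epsilon$, so any valid output identifies every bit $b_i$. Because $\rho_b^{\ot T}$ rearranges into the product $\bigotimes_{i=1}^n(\rho^{(b_i)})^{\ot T}$ over $n$ independent registers with an independent uniform prior on each $b_i$, the task is $n$ independent instances of binary discrimination, and the probability of identifying the whole string with an arbitrary joint measurement factorizes: $P_{\mathrm{succ}}\le (p^*)^n$, where $p^*=\tfrac12+\tfrac14\|(\rho^{(0)})^{\ot T}-(\rho^{(1)})^{\ot T}\|_1$ is the single-register Helstrom optimum. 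I would justify the factorization cleanly through the semidefinite-programming dual of minimum-error discrimination: a tensor product $\bigotimes_i Z_i$ of single-register dual optimizers is dual-feasible for the joint problem, which bounds the joint optimum by the product of the individual optima.

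Finally I would convert this into a copy bound. Fuchs--van de Graaf gives $1-p^*=\tfrac12(1-\tfrac12\|\cdot\|_1)\ge\tfrac14 F((\rho^{(0)})^{\ot T},(\rho^{(1)})^{\ot T})^2=\tfrac14 F(\rho^{(0)},\rho^{(1)})^{2T}$ by multiplicativity of fidelity, and here $F(\rho^{(0)},\rho^{(1)})=\sqrt{1-9\epsilon^2}$ so $\log(1/F)=\Theta(\epsilon^2)$. A successful algorithm needs $(p^*)^n\ge1-\delta$, i.e. $n(1-p^*)\lesssim\delta$; combined with $1-p^*\ge\tfrac14 F^{2T}$ this forces $F^{2T}\lesssim\delta/n$ and hence $T=\Omega(\log(n/\delta)/\log(1/F))=\Omega(\epsilon^{-2}\log(n/\delta))$, matching the upper bound and establishing $\Theta(\epsilon^{-2}\log(n/\delta))$ for constant $k$. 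The main obstacle is this lower-bound step, since it must hold against arbitrary entangled measurements across all copies and qubits; the product structure together with the SDP-duality factorization is what makes this tractable, and choosing the constants so that the per-qubit gap cleanly exceeds $2\epsilon$ is the one place requiring care.
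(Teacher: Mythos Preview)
Your upper-bound strategy matches the paper's: perform independent random Pauli bases on every qubit, restrict the data to each target set $S$, run the $k$-qubit reconstruction, and union-bound over the $\binom{n}{k}$ (or $M$) sets. One slip: the scheme analyzed in Theorem~1 is \emph{not} random---it deterministically measures every $P\in\{\sigma_X,\sigma_Y,\sigma_Z\}^{\otimes n}$ exactly $m$ times---so ``the restricted data is exactly what Theorem~1 would produce'' is not literally true. The paper bridges this by a concentration step (its Observation~4): with $T=2m\cdot 3^k$ random-Pauli shots, each of the $3^k$ Pauli settings on $S$ is hit at least $m$ times with high probability, after which Theorem~1's analysis applies verbatim. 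You should insert this step (or equivalently rerun the McDiarmid argument directly for the random scheme); without it the invocation of Theorem~1 is not justified, though the fix is routine.

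Your lower bound is correct and takes a genuinely different route from the paper. The paper chooses the same product family (biased coins on each qubit), but then argues that because the hard instances are diagonal, the off-diagonal parts of any joint POVM are irrelevant, reducing to a purely classical problem; it then uses a reverse-Chernoff anti-concentration bound for coin discrimination and Bayes-rule independence to get $(1-\delta')^n\ge 1-\delta$. You instead stay quantum: Helstrom gives the optimal single-register success $p^\ast$, the SDP dual for multi-hypothesis discrimination tensorizes to show $P_{\mathrm{succ}}\le (p^\ast)^n$ against arbitrary entangled measurements, and Fuchs--van~de~Graaf plus multiplicativity of fidelity converts this to $T=\Omega(\epsilon^{-2}\log(n/\delta))$. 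Your argument is cleaner in that it never needs to pass through classicality and would apply unchanged to non-diagonal hard instances; the paper's argument is more elementary in that it avoids SDP duality and fidelity bounds, relying only on a known binomial tail lower bound. Both yield the same asymptotic constant-free bound.
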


For the lower bound part, we show that to succeed with probability at least $1-\delta$, it
  is necessary to have $ C_\ell \frac{\log(n/\delta)}{\epsilon^{2}}$ copies even if joint measurement on many copies of $\rho$ are allowed, where $C_\ell>0$ is a constant.   
  For example, if $m$ copies of the $n$-qubit system are prepared and measured.
In the joint measure setting, the $mn$ qubits may be accessed collectively.

The upper bound 
 can be achieved by the following algorithm,

 \begin{algorithm}
Repeat the following measurement $\mathcal{O}(g(k)\cdot\epsilon^{-2}\cdot\log({{n}\choose{k}}/\delta))$ times\;
Measuring each qubit using some informationally complete measurement for any copy of $\rho$\;
\caption{Quantum Overlapping Tomography}
\end{algorithm}

Here $g(k)>0$ is a function depends on the informationally complete measurements and $k$ only.
Therefore, for constant $k$ and fixed informationally complete measurements, the used number of copies becomes $\mathcal{O}(\epsilon^{-2}\cdot\log (n/\delta))$.

As an example, we choose informationally complete measurement $\mathcal{M}=\frac{1}{6}\{\sigma_I+\sigma_X,\sigma_I-\sigma_X,\sigma_I+\sigma_Y,\sigma_I-\sigma_Y,\sigma_I+\sigma_Z,\sigma_I-\sigma_Z\}$, which can be regarded as random Pauli measurement, on each qubit. Then we can obtain the measurement scheme using $\mathcal{O}(\epsilon^{-2}\cdot 10^k\cdot \log(M/\delta))$ copies of $\rho$. 

This is the first nontrivial example that Pauli measurements are as powerful as general joint measurements asymptotically. For this example, the number of unknown parameters of the output is exponentially larger than the number of copies.

This implies that for all observables $O_S\otimes I_{\bar{S}}$ with $S$ being a set consisting of at most $k$-qubit, $\bar{S}$ being the complementary set of $S$ and $-I\leq O_S\leq I$, we can output estimation $o_S$ such that
\begin{align*}
|\tr[\rho(O_S\otimes I_{\bar{S}})]-o_S|\leq \eps
\end{align*}

\section{Preliminaries}

A positive-operator valued measure (POVM) on a finite dimensional Hilbert space $\mathcal{H}$ is a set of positive semi-definite matrices $\mathcal{M}=\{M_i\}$ such that
\begin{align*}
\sum M_i=I_{\mathcal{H}}.
\end{align*}

We need the concept of the informationally complete POVM as follows,
\begin{dfn}
An informationally complete POVM is a POVM whose outcome probabilities are sufficient to determine any state.
\end{dfn}

Equaivalently, a POVM $\mathcal{M}=\{M_i\}$ on $d$-dimensional $\mathcal{H}$ is informationally complete if the linear span of $\{M_i\}$ equals to the whole $d\times d$ matrix space. In qubit system, it means $\sigma_I,\sigma_X,\sigma_Y$ and $\sigma_Z$ all live in the linear span of $\{M_i\}$, where $\sigma_I$, $\sigma_X,\sigma_Y$ and $\sigma_Z$ are Pauli matrices,
\begin{align*}
\sigma_I=\begin{bmatrix}1 &0\\0&1\end{bmatrix}, \sigma_X=\begin{bmatrix}0 &1\\1&0\end{bmatrix},  \sigma_Z=\begin{bmatrix}1 &0\\0&-1\end{bmatrix}, \sigma_Y=\begin{bmatrix}0 &i\\-i&0\end{bmatrix}.
\end{align*}

It is direct to observe that $\mathcal{M}=\frac{1}{6}\{\sigma_I+\sigma_X,\sigma_I-\sigma_X,\sigma_I+\sigma_Y,\sigma_I-\sigma_Y,\sigma_I+\sigma_Z,\sigma_I-\sigma_Z\}$ is an informationally complete measurement.
\begin{observation}
Given sufficient measurement outcomes of an informationally complete POVM, one can determine the state with high accuracy and confidence.
\end{observation}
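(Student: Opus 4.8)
The plan is to turn the qualitative claim into a quantitative one by exhibiting an explicit linear reconstruction map and then controlling the statistical error with a concentration bound. Write the outcome probabilities of the POVM $\mathcal{M}=\{M_i\}$ as $p_i=\tr[\rho M_i]$, so that the measurement induces the linear map $\Phi:\rho\mapsto (p_i)_i$. Because $\mathcal{M}$ is informationally complete, the operators $\{M_i\}$ span the whole space of $d\times d$ matrices, which is equivalent to $\Phi$ being injective on Hermitian matrices. Consequently $\Phi$ admits a left inverse realized by a dual frame $\{N_i\}$ of Hermitian operators satisfying $\rho=\sum_i p_i N_i$ for every state $\rho$. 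The first step is therefore to fix such a dual frame and record the finite constant
\begin{align*}
c_{\mathcal{M}}:=\sup_{\|x\|_\infty\le 1}\Big\|\sum_i x_i N_i\Big\|_1,
\end{align*}
which measures how errors in the probability vector propagate to errors in the reconstructed operator.

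Next I would analyze the empirical estimate. Measuring $N$ independent copies of $\rho$ and recording the relative frequency $\hat p_i$ of each outcome produces an unbiased estimator of $p_i$. Since each single-shot indicator is a bounded $\{0,1\}$ random variable, a Chernoff/Hoeffding bound shows that $|\hat p_i-p_i|$ is small except with probability exponentially small in $N$; a union bound over the (constantly many) outcomes then controls $\max_i|\hat p_i-p_i|$ uniformly with confidence $1-\delta$. Defining the estimator $\hat\rho=\sum_i \hat p_i N_i$ and using linearity gives
\begin{align*}
\|\hat\rho-\rho\|_1=\Big\|\sum_i(\hat p_i-p_i)N_i\Big\|_1\le c_{\mathcal{M}}\cdot\max_i|\hat p_i-p_i|,
\end{align*}
so driving the right-hand side below $\epsilon$ requires only $N=\mathcal{O}(c_{\mathcal{M}}^2\,\epsilon^{-2}\log(1/\delta))$ samples. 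A final clean-up step is to project $\hat\rho$ onto the set of density matrices (positive semidefinite, unit trace); since this set is convex and contains $\rho$, the projection can only decrease the trace-distance error, so the bound is preserved while guaranteeing a legitimate output state.

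The main obstacle is controlling the conditioning constant $c_{\mathcal{M}}$, i.e. ensuring the chosen dual frame does not amplify statistical noise too badly; for a generic informationally complete POVM this constant could be large, so the quantitative content of the observation lives entirely in bounding it for the measurement at hand. For the single-qubit measurement $\mathcal{M}=\frac16\{\sigma_I\pm\sigma_X,\sigma_I\pm\sigma_Y,\sigma_I\pm\sigma_Z\}$ this is a direct computation: the dual frame is explicit (each $N_i$ of the form $\tfrac12\sigma_I\pm\tfrac32\sigma_P$) and $c_{\mathcal{M}}$ is an absolute constant, which is exactly what yields the factor $g(k)$, and hence the $10^k$ scaling, once the single-qubit analysis is tensored across $k$ qubits in the tomography theorems. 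I expect the remaining ingredients—injectivity from the span condition, the concentration estimate, and the contraction property of the projection—to be routine once this constant is pinned down.
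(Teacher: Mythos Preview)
The paper does not prove this observation; it is stated as a basic folklore fact and used as a black box in the analysis of Algorithm~1. Your proposal supplies a complete and correct argument via the standard dual-frame reconstruction $\hat\rho=\sum_i\hat p_iN_i$ together with Hoeffding concentration on each $\hat p_i$, which is a perfectly good way to justify the statement.

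Two minor remarks. First, the projection step does not literally \emph{decrease} the trace-distance error: metric projection onto a convex set is nonexpansive in Hilbert norms, not in the trace norm. What you actually get is $\|\tilde\rho-\rho\|_1\le\|\tilde\rho-\hat\rho\|_1+\|\hat\rho-\rho\|_1\le 2\|\hat\rho-\rho\|_1$, using that $\rho$ itself is a feasible point; this costs only a harmless factor of~$2$. Second, your closing claim that tensoring this analysis ``yields the $10^k$ scaling'' is optimistic. The $\ell_\infty$-to-trace-norm constant $c_{\mathcal{M}^{\otimes k}}$ for the tensored dual frame $N_{i_1}\otimes\cdots\otimes N_{i_k}$ scales like $(\sum_i\|N_i\|_1)^k$, which for the six-outcome qubit POVM is $18^k$, not $10^k$. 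The paper's $10^k$ for Pauli tomography comes from a different and sharper argument (Theorem~1): it bounds $\mathbb{E}\|\rho-\sigma\|_2$ via a variance computation that exploits the overlap structure of Pauli measurements and then applies McDiarmid's inequality directly to the $2$-norm, rather than going through the worst-case $\ell_\infty$ bound on frequencies. So your argument establishes the observation as stated, but the specific $g(k)=10^k$ constant is obtained elsewhere by a more refined route.
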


\begin{observation}
For information complete POVMs, $\mathcal{M}_1$ on $\mathcal{H}_1$ and $\mathcal{M}_2$ on $\mathcal{H}_2$, $\mathcal{M}_1\otimes\mathcal{M}_2$ is an informationally complete POVM on $\mathcal{H}_1\otimes\mathcal{H}_2$.
\end{observation}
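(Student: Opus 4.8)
The plan is to verify two things separately: that $\mathcal{M}_1 \otimes \mathcal{M}_2 := \{M_i \otimes N_j\}$ is a genuine POVM, and that its elements span the full operator space on $\mathcal{H}_1 \otimes \mathcal{H}_2$. Write $\mathcal{M}_1 = \{M_i\}$ and $\mathcal{M}_2 = \{N_j\}$, and let $\mathcal{L}(\mathcal{H})$ denote the space of matrices acting on $\mathcal{H}$.

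First I would check the POVM axioms. Each $M_i \otimes N_j$ is positive semidefinite because a tensor product of two positive semidefinite operators is positive semidefinite. For the completeness relation, I would use bilinearity of the tensor product to factor the double sum,
\[
\sum_{i,j} M_i \otimes N_j = \Bigl(\sum_i M_i\Bigr) \otimes \Bigl(\sum_j N_j\Bigr) = I_{\mathcal{H}_1} \otimes I_{\mathcal{H}_2} = I_{\mathcal{H}_1 \otimes \mathcal{H}_2},
\]
so $\mathcal{M}_1 \otimes \mathcal{M}_2$ is a valid POVM on $\mathcal{H}_1 \otimes \mathcal{H}_2$.

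Second, and this is the substance of the statement, I would establish informational completeness through the canonical identification $\mathcal{L}(\mathcal{H}_1 \otimes \mathcal{H}_2) \cong \mathcal{L}(\mathcal{H}_1) \otimes \mathcal{L}(\mathcal{H}_2)$. By hypothesis the linear span of $\{M_i\}$ is all of $\mathcal{L}(\mathcal{H}_1)$ and the span of $\{N_j\}$ is all of $\mathcal{L}(\mathcal{H}_2)$. The relevant elementary fact is that if one family spans $V$ and another spans $W$, then the family of pairwise tensor products spans $V \otimes W$: the simple tensors $A \otimes B$ already span $V \otimes W$, and expanding $A$ and $B$ in the two spanning sets and using bilinearity expresses each $A \otimes B$ as a combination of the products of spanning elements. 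Applying this with $V = \mathcal{L}(\mathcal{H}_1)$ and $W = \mathcal{L}(\mathcal{H}_2)$ shows that $\{M_i \otimes N_j\}$ spans $\mathcal{L}(\mathcal{H}_1 \otimes \mathcal{H}_2)$, which is exactly informational completeness.

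There is no real obstacle here; the only point requiring a moment of care is the operator-space identification $\mathcal{L}(\mathcal{H}_1 \otimes \mathcal{H}_2) \cong \mathcal{L}(\mathcal{H}_1) \otimes \mathcal{L}(\mathcal{H}_2)$, after which the conclusion is immediate from the span-of-tensor-products fact. Finally I would note that the statement iterates to any finite tensor product, which is precisely what licenses building an informationally complete $n$-qubit measurement from single-qubit informationally complete measurements such as $\mathcal{M}$.
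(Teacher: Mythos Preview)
Your proposal is correct. The paper offers no proof of this observation at all; it states the claim and immediately applies it to conclude that $\mathcal{M}^{\otimes n}$ is informationally complete on $n$ qubits. Your argument---verifying the POVM axioms via bilinearity and then invoking the span-of-tensor-products fact through the identification $\mathcal{L}(\mathcal{H}_1\otimes\mathcal{H}_2)\cong\mathcal{L}(\mathcal{H}_1)\otimes\mathcal{L}(\mathcal{H}_2)$---is exactly the standard justification the paper leaves implicit, so there is nothing to compare.
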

Directly, $\mathcal{M}^{\otimes n}=\mathcal{M}=\frac{1}{6^n}\{\sigma_I+\sigma_X,\sigma_I-\sigma_X,\sigma_I+\sigma_Y,\sigma_I-\sigma_Y,\sigma_I+\sigma_Z,\sigma_I-\sigma_Z\}^{\otimes n}$ is an informationally complete POVM of $n$-qubit system.
\begin{dfn}
Let $X_1,X_2,\cdots,X_n$ be $n$ samples of a distribution on $\{1,2,\cdots,n\}$. Then the empirical distribution is defined as
\begin{align*}
\hat{p(i)}=\frac{\mathrm{number~of~occurrences~of}~i}{n}
\end{align*}

\end{dfn}

The following McDiarmid’s inequality \cite{mcdiarmid_1989} will be used in this paper.
\begin{lem}\label{mc}
Consider independent random variables ${\displaystyle X_{1},X_{2},\dots X_{n}}$ on probability space $ {\displaystyle (\Omega ,{\mathcal {F}},{\text{P}})}$ where ${\displaystyle X_{i}\in {\mathcal {X}}_{i}}$ for all ${\displaystyle i}$ and a mapping ${\displaystyle f:{\mathcal {X}}_{1}\times {\mathcal {X}}_{2}\times \cdots \times {\mathcal {X}}_{n}\rightarrow \mathbb {R} }$. Assume there exist constant $ {\displaystyle c_{1},c_{2},\dots ,c_{n}} $ such that for all $ {\displaystyle i}$,
\begin{widetext} 
\begin{align}{\displaystyle {\underset {x_{1},\cdots ,x_{i-1},x_{i},x_{i}',x_{i+1},\cdots ,x_{n}}{\sup }}|f(x_{1},\dots ,x_{i-1},x_{i},x_{i+1},\cdots ,x_{n})-f(x_{1},\dots ,x_{i-1},x_{i}',x_{i+1},\cdots ,x_{n})|\leq c_{i}.} 
\end{align}
\end{widetext}
(In other words, changing the value of the ${\displaystyle i}$-th coordinate ${\displaystyle x_{i}}$ changes the value of ${\displaystyle f}$ by at most ${\displaystyle c_{i}}$.) Then, for any ${\displaystyle \epsilon >0}$,
\begin{widetext} 
\begin{align} 
{\displaystyle {\mathrm{Pr}}(f(X_{1},X_{2},\cdots ,X_{n})-\mathbb {E} [f(X_{1},X_{2},\cdots ,X_{n})]\geq \epsilon )\leq \exp \left(-{\frac {2\epsilon ^{2}}{\sum _{i=1}^{n}c_{i}^{2}}}\right)} .
\end{align}
\end{widetext}
\end{lem}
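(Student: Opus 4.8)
The plan is to prove this via the Doob martingale (martingale-difference) method, which reduces the bounded-differences hypothesis to an Azuma--Hoeffding-type estimate. First I would introduce the filtration generated by the leading coordinates: set $Z_0 = \bbE[f(X_1,\dots,X_n)]$ and, for $1 \le i \le n$, define the conditional expectation $Z_i = \bbE[f(X_1,\dots,X_n) \mid X_1,\dots,X_i]$. Then $Z_n = f(X_1,\dots,X_n)$ and $Z_0,Z_1,\dots,Z_n$ is a martingale with respect to this filtration, so the quantity we want to control telescopes as $f - \bbE[f] = \sum_{i=1}^n D_i$, where $D_i = Z_i - Z_{i-1}$ are the martingale differences.

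The key step is to show that each $D_i$, conditioned on $X_1,\dots,X_{i-1}$, lies in an interval of length at most $c_i$. Here independence of the $X_j$ is used crucially. Because the coordinates are independent, the conditional distribution of $X_{i+1},\dots,X_n$ does not depend on the value of $X_i$, so integrating those out the two functions $\ell_i = \inf_{x_i}\bbE[f \mid X_1,\dots,X_{i-1},x_i]$ and $u_i = \sup_{x_i}\bbE[f \mid X_1,\dots,X_{i-1},x_i]$ satisfy $u_i - \ell_i \le c_i$: the hypothesis bounds the change in $f$ caused by altering only the $i$-th coordinate by $c_i$ pointwise, and taking expectations over the remaining coordinates preserves this width. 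Consequently $D_i$ is (conditioned on the past) a mean-zero random variable supported on an interval of width at most $c_i$.

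The next step is the exponential (Chernoff) bound. For any $s > 0$, Markov's inequality gives $\Pr(f - \bbE[f] \ge \eps) \le e^{-s\eps}\,\bbE\!\bigl[e^{s(f-\bbE[f])}\bigr]$. I would bound the moment-generating factor by peeling off one martingale difference at a time via the tower property: conditioning on $X_1,\dots,X_{n-1}$ and applying Hoeffding's lemma --- a mean-zero variable $Y$ supported on an interval of width $c$ obeys $\bbE[e^{sY}] \le e^{s^2 c^2/8}$ --- to $D_n$, then iterating down through $D_{n-1},\dots,D_1$, yields $\bbE\!\bigl[e^{s(f-\bbE[f])}\bigr] \le \exp\!\bigl(\tfrac{s^2}{8}\sum_{i=1}^n c_i^2\bigr)$. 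Combining with the Markov step and optimizing the free parameter by choosing $s = 4\eps/\sum_{i=1}^n c_i^2$ produces the stated bound $\exp\!\bigl(-2\eps^2/\sum_{i=1}^n c_i^2\bigr)$.

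The main obstacle is the conditional bounded-range claim in the second step: one must verify carefully that the bounded-differences condition on $f$ --- a pointwise, supremum-over-one-coordinate statement --- transfers into a bound on the width of the conditional range of the martingale difference $D_i$. This is precisely where independence enters, since without it the conditional law of the later coordinates could depend on $X_i$ and the clean width bound $u_i - \ell_i \le c_i$ would break down. Once that claim is secured, the remainder is the standard Chernoff-plus-Hoeffding-lemma computation and the final optimization over $s$.
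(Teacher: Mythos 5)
Your proof is correct: it is the standard Doob-martingale argument (bounded differences transfer to conditional increments via independence, then Hoeffding's lemma and Chernoff optimization), and the constant-tracking, including the choice $s = 4\epsilon/\sum_{i=1}^n c_i^2$ yielding the exponent $-2\epsilon^2/\sum_{i=1}^n c_i^2$, is right. The paper itself offers no proof of this lemma --- it imports it by citation to McDiarmid's 1989 survey --- and the argument you give is essentially the proof found in that cited source, so there is nothing to reconcile.
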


\section{Quantum Tomography Using Pauli Measurement}

We start from the following observation:

When we measure an element of Pauli group, for instance $\sigma_X\sigma_Y$, on a two-qubit state $\rho$, the outcome is a sample from a $4$-dimensional probability distribution, says $(p_{00},p_{01},p_{10},p_{11})$, such that
\begin{align*}
\tr(\rho(\sigma_X\otimes \sigma_Y))=p_{00}-p_{01}-p_{10}+p_{11}.
\end{align*}

One can easily observe that
\begin{align*}
\tr(\rho(\sigma_X\otimes \sigma_I))=p_{00}+p_{01}-p_{10}-p_{11},\\
\tr(\rho(\sigma_I\otimes \sigma_Y))=p_{00}-p_{01}+p_{10}-p_{11},\\
\tr(\rho(\sigma_I\otimes \sigma_I))=p_{00}+p_{01}+p_{10}+p_{11}.
\end{align*}

In other words, by measuring $XY$, we actually obtained a sample of $\sigma_X\sigma_I$, a sample of $\sigma_Y\sigma_I$, and a sample of $\sigma_I\sigma_I$. 

This is also true for general $n$-qubit system as the following observation shows,
\begin{observation}
For any $P=P_1\otimes P_2\otimes\cdots\otimes P_n\in\{\sigma_X,\sigma_Y,\sigma_Z\}^{\otimes n}$, the measurement result of performing measurement $P_i$ on the $i$-th qubit is an $n$-bit string $s$. One can interpretes the measurement outcome of performing $Q_i\in\{\sigma_I,\sigma_X,\sigma_Y,\sigma_Z\}$ on the $i$-th qubit if $Q_i=P_i$ or $Q_i=\sigma_I$. We call these $Q=Q_1\otimes Q_2\otimes\cdots\otimes Q_n$ corresponds to $P$.
\end{observation}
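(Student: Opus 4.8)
The plan is to make precise the sense in which a single measurement outcome ``is'' a sample for every $Q$ corresponding to $P$, namely that the $\pm1$-valued quantity read off from the outcome string $s$ has expectation exactly $\tr(\rho Q)$. First I would fix, for each qubit $i$ and each $P_i\in\{\sigma_X,\sigma_Y,\sigma_Z\}$, the two eigenvectors $\ket{v^{(i)}_0},\ket{v^{(i)}_1}$ of $P_i$ with eigenvalues $+1$ and $-1$ respectively, so that performing the measurement $P_i$ means applying the projective measurement $\{\proj{v^{(i)}_0},\proj{v^{(i)}_1}\}$ and recording the bit $s_i$. The joint measurement on all $n$ qubits is then the product projective measurement with outcomes $s\in\{0,1\}^n$ and rank-one projectors $\Pi_s=\bigotimes_{i=1}^n\proj{v^{(i)}_{s_i}}$, so that outcome $s$ occurs with probability $p(s)=\tr(\rho\,\Pi_s)$.

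The key step is the observation that $\sigma_I$ is diagonal in every orthonormal basis, so each $\proj{v^{(i)}_{s_i}}$ is simultaneously an eigenprojector of $P_i$ (eigenvalue $(-1)^{s_i}$) and of $\sigma_I$ (eigenvalue $1$). Hence, for any $Q=Q_1\otimes\cdots\otimes Q_n$ with $Q_i\in\{P_i,\sigma_I\}$, the projector $\Pi_s$ is an eigenprojector of $Q$:
\begin{align*}
Q\,\Pi_s=\bigotimes_{i=1}^n Q_i\proj{v^{(i)}_{s_i}}=\Big(\prod_{i:\,Q_i=P_i}(-1)^{s_i}\Big)\Pi_s=:\chi_Q(s)\,\Pi_s ,
\end{align*}
since the identity factors contribute $1$ and the factors with $Q_i=P_i$ contribute $(-1)^{s_i}$. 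Because $\{\Pi_s\}_s$ is a complete family of orthogonal rank-one projectors and $Q$ is diagonal in this basis, $Q=\sum_s\chi_Q(s)\,\Pi_s$, and therefore
\begin{align*}
\bbE_{s\sim p}\big[\chi_Q(s)\big]=\sum_s\chi_Q(s)\,\tr(\rho\,\Pi_s)=\tr\Big(\rho\sum_s\chi_Q(s)\Pi_s\Big)=\tr(\rho\,Q).
\end{align*}
This recovers the two-qubit computation preceding the statement, where $\chi_{\sigma_X\sigma_Y},\chi_{\sigma_X\sigma_I},\chi_{\sigma_I\sigma_Y},\chi_{\sigma_I\sigma_I}$ are the four sign patterns of $(s_1,s_2)$, and it shows that the single string $s$ furnishes an unbiased estimate of $\tr(\rho Q)$ for each of the $2^n$ choices of $Q$.

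I expect the only delicate point to be conceptual rather than computational: the $2^n$ quantities $\chi_Q(s)$ are all read off from the \emph{same} outcome $s$, so they are highly correlated and are not independent samples. The observation asserts only that each has the correct marginal expectation, which is exactly the unbiasedness established above and is all that is needed downstream; the correlations are harmless because the later concentration arguments bound deviations coordinatewise (e.g.\ through McDiarmid's inequality) rather than relying on independence across the different $Q$. Making the shared-eigenbasis identification and the product structure explicit, as above, is what converts the informal ``one can interpret'' into the precise statement that measuring $P$ simultaneously samples every corresponding $Q$.
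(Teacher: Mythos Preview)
Your proof is correct and makes rigorous exactly what the paper intends. The paper itself does not give a separate proof of this observation; it simply states it as the natural $n$-qubit generalization of the two-qubit computation that precedes it (the four sign-pattern identities for $\sigma_X\sigma_Y$, $\sigma_X\sigma_I$, $\sigma_I\sigma_Y$, $\sigma_I\sigma_I$). Your argument---identifying $\{\Pi_s\}$ as a common eigenbasis for every $Q$ with $Q_i\in\{P_i,\sigma_I\}$ and reading off $\bbE[\chi_Q(s)]=\tr(\rho Q)$---is precisely the formal content behind that example, so there is no methodological difference to report. Your remark that the $2^n$ samples $\chi_Q(s)$ are correlated but that only their marginal unbiasedness matters (with concentration handled later via McDiarmid) is also exactly how the paper proceeds downstream.
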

Our measurement scheme is as follows: For any $\epsilon>0$, fix an integer $m=16\cdot\frac{10^n\log\frac{1}{\delta}}{3^n\cdot \epsilon^2}$.

For any $P\in\{\sigma_X,\sigma_Y,\sigma_Z\}^{\otimes n}$, one performs $m$ times $P$ on $\rho$, and records the $m$ samples of the $2^n$ dimensional outcome distribution.

According to the key observation, this measurement scheme provides $m\cdot 3^{n-w}$ samples of the expectation $\tr(\rho P)$, says $\frac{\mu_P}{m\cdot 3^{n-w}}$ for each Pauli operator $P\in \{\sigma_I,\sigma_X,\sigma_Y,\sigma_Z\}^{\otimes n}$ with weight $w$, where $-m\cdot 3^{n-w}\leq\mu_P\leq m\cdot 3^{n-w}$.

Output 
\begin{align*}
\sigma=\sum_P \frac{\mu_P}{m\cdot 3^{n-w}\cdot 2^n} P.
\end{align*}

Using this scheme, we obtained $m\cdot 3^n$ independent samples,
\begin{align*}
X_1,X_2,\cdots, X_{m\cdot 3^n}
\end{align*}
where each $0\leq X_i\leq 2^n-1$.

Fruthermore, $X_1,X_2,\cdots,X_{m}$ corresponds to the measurement $\sigma_X^{\otimes n}$; $X_{m+1},X_{m+2},\cdots,X_{2m}$ corresponds to the measurement $\sigma_X^{\otimes n-1}\otimes \sigma_Y$; $\cdots$

It is direct to observe that, for any $P$ of weight $w$,
$\mu_P=\sum_{j=0}^{m\cdot 3^{n-w}-1} Z_j$,
where $Z_j$ are independent samples from distribution $Z$
\begin{align*}
\mathrm{Pr}(Z=1)=\frac{1+\tr (\rho P)}{2}\\
\mathrm{Pr}(Z=-1)=\frac{1-\tr (\rho P)}{2}.
\end{align*}
Furthermore, $Z_j$ can be obtained from samples
\begin{align*}
X_1,X_2,\cdots, X_{m\cdot 3^n}.
\end{align*}

Therefore,
\begin{align*}
\sigma=\sum_P \frac{\mu_P}{m\cdot 3^{n-w_P}\cdot 2^n} P.
\end{align*}
is defined according to samples
\begin{align*}
X_1,X_2,\cdots, X_{m\cdot 3^n}.
\end{align*}
It is direct to verify that
\begin{align*}
\mathbb{E}\sigma=\rho,
\end{align*}
where the expectation is taken over the probabilistic distribution according to the measurements.

For any 
\begin{align*}
\rho=\sum_P \frac{\alpha_P}{2^n} P
\end{align*}
we can define the function $f:X_1\times X_2\times \cdots\times X_{m\cdot 3^n}\mapsto \mathbb{R}$
\begin{align*}
f=||\rho-\sigma||_2
\end{align*}
According to Cauchy inequality, we have
\begin{align*}
&\mathbb{E}f \\
\leq& \sqrt{\mathbb{E}f^2}\\
=& \sqrt{\mathbb{E} (\tr\rho^2-2\tr \rho\sigma+\tr\sigma^2)}\\
=& \sqrt{\mathbb{E}\tr\sigma^2-\tr\rho^2}\\
=& \sqrt{\frac{1}{2^n}\sum_P (\mathbb{E} \frac{\mu_P^2}{m^2\cdot 9^{n-w_P}}-\alpha_P^2)}\\
=&\sqrt{\frac{1}{m\cdot 2^n}\cdot{\sum_P \frac{1-\alpha_P^2}{3^{n-w_P}}}}\\
<&\sqrt{\frac{1}{m\cdot 2^n}\cdot{\sum_P \frac{1}{3^{n-w_P}}}}\\
=& \sqrt{\frac{1}{m\cdot 2^n}\cdot{\sum_{w_P=0}^n \frac{1}{3^{n-w_P}}{{n}\choose{w_P}}3^{w_P}}}\\
=&\sqrt{\frac{1}{m\cdot 6^n}\cdot\sum_{w_P=0}^n (1+9)^n}\\
=& \sqrt{\frac{5^n}{m\cdot 3^n}}\\
<& \frac{\epsilon}{2\cdot \sqrt{2^n}}.
\end{align*}

For any sample $X_i$ corresponding to $P\in\{\sigma_X,\sigma_Y,\sigma_Z\}^{\otimes n}$. If only $X_i$ is changed, it would only change $\mu_Q$ for those $Q\in\{\sigma_I,\sigma_X,\sigma_Y,\sigma_Z\}^{\otimes n}$ where
$Q$ is obtained by replacing some $\{\sigma_X,\sigma_Y,\sigma_Z\}$s of $P$ by $\sigma_I$.
Moreover, those $\mu_Q$ would change at most $2$.
According to triangle inequality, $f$ would change at most 
\begin{align*}
&||\sum_Q  \frac{\Delta\mu_Q}{m\cdot 3^{n-w_Q}\cdot 2^n} Q||_2\\
=&\sqrt{\sum_Q \frac{\Delta\mu_Q^2}{m^2\cdot 9^{n-w_Q}\cdot 2^n}}\\
\leq& \sqrt{\sum_Q \frac{2^2}{m^2\cdot 9^{n-w_Q}\cdot 2^n}}\\
=& \sqrt{\sum_{w_Q=0}^n \frac{2^2}{m^2\cdot 9^{n-w_Q}\cdot 2^n}{{n}\choose{w_Q}}}\\
=&\frac{2\cdot \sqrt{5}^n}{m\cdot 3^n},
\end{align*}
where $Q$ ranges over all Paulis which corresponds to $P$, and $\delta\mu_Q$ denotes the difference of $\mu_Q$ when $X_i$ is changed.

For $\epsilon>0$, let $\epsilon'=\frac{\epsilon}{\sqrt{2^n}}$.
We have 
\begin{align*}
&\mathrm{Pr}(||\rho-\sigma||_1>\epsilon)\\
\leq& \mathrm{Pr}(||\rho-\sigma||_2>\epsilon')\\
=& \mathrm{Pr}(f>\epsilon')\\
\leq & \mathrm{Pr}(f-\mathbb{E}f>\frac{\epsilon'}{2})\\
<&\exp(-\frac{\epsilon'^2}{4 \cdot\frac{4\cdot 5^n}{m^2\cdot 9^n}\cdot m\cdot 3^n})\\
=&\exp(-\frac{m\cdot 3^n\cdot\epsilon'^2}{16\cdot 5^n})\\
<&\delta, 
\end{align*}
where the first step is by the relation between the trace norm and 2 norm; the third step is by $\mathbb{E}f\leq \frac{\epsilon'}{2}$; the fourth step is by Lemma \ref{mc} (McDiarmid’s inequality).

The total number of used copies is
\begin{align*}
m\cdot 3^n=16\cdot\frac{10^n\log\frac{1}{\delta}}{\epsilon^2}.
\end{align*}

\section{Quantum Overlapping Tomography}
In this section, we study the quantum overlapping tomography. 

We first analyze Algorithm 1.

Fixing $n$ informationally complete POVM of one-qubit system, says $\mathcal{M}_1,\mathcal{M}_2,\cdots,\mathcal{M}_n$. For any given unknown $n$-qubit state $\rho$, we perform $\mathcal{M}_1\otimes\mathcal{M}_2\otimes\cdots\otimes\mathcal{M}_n$ on $\rho$ and obtain a string $s_1s_2\cdots s_n$ with $s_i$ denoting the measurement outcome of $\mathcal{M}_i$. Assume the output probability distribution is $p_{1,2,\cdots,n}$. we know that $s_1s_2\cdots s_n$ obeys the distribution of $p_{1,2,\cdots,n}$.

The key observation is that measuring each qubit independently preserves the local structure. More precisely,
\begin{observation}
For any $S=\{i_1,i_2,\cdots,i_r\}\subseteq\{1,2,\cdots,n\}$, the distribution of $s_{i_1}s_{i_2}\cdots s_{i_r}$ obeys the outcome distribution of performing $\mathcal{M}_{i_1}\otimes\mathcal{M}_{i_2}\otimes\cdots\otimes\mathcal{M}_{i_r}$ on $\rho_S$. Moreover,  $s_{i_1}s_{i_2}\cdots s_{i_r}$ obeys the marginal distribution $p_{S}$.
\end{observation}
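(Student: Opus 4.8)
The plan is to verify the claim directly from the definition of the product-POVM outcome distribution together with the completeness relation of each single-qubit POVM. Write $\mathcal{M}_j=\{M_{j,s_j}\}_{s_j}$ for the POVM on the $j$-th qubit, so that the full outcome probability of performing $\mathcal{M}_1\otimes\cdots\otimes\mathcal{M}_n$ on $\rho$ is
\begin{align*}
p(s_1,\dots,s_n)=\tr\!\left[\rho\,(M_{1,s_1}\otimes\cdots\otimes M_{n,s_n})\right].
\end{align*}
The marginal distribution of the outcomes on $S=\{i_1,\dots,i_r\}$ is obtained by summing $p(s_1,\dots,s_n)$ over all indices $s_j$ with $j\notin S$; the goal is to show this sum equals $\tr[\rho_S\,(M_{i_1,s_{i_1}}\otimes\cdots\otimes M_{i_r,s_{i_r}})]$, which by definition is the outcome distribution of $\mathcal{M}_{i_1}\otimes\cdots\otimes\mathcal{M}_{i_r}$ on $\rho_S$.

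The first step is to push the summation inside the trace using linearity, and to exploit the tensor-product structure so that the sum over each $s_j$ with $j\notin S$ acts only on the $j$-th factor. For each such $j$ the POVM completeness relation $\sum_{s_j}M_{j,s_j}=I$ collapses that factor to the identity. Hence the summed operator becomes $\bigotimes_{j\in S}M_{j,s_j}\otimes\bigotimes_{j\notin S}I$, i.e.\ it acts nontrivially only on the qubits in $S$.

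The second step is to invoke the standard partial-trace identity $\tr[\rho\,(A_S\otimes I_{\bar S})]=\tr[\rho_S\,A_S]$, where $\rho_S=\tr_{\bar S}\rho$ and $\bar S$ is the complement of $S$. Applying it with $A_S=\bigotimes_{j\in S}M_{j,s_j}$ yields exactly the outcome distribution of performing $\mathcal{M}_{i_1}\otimes\cdots\otimes\mathcal{M}_{i_r}$ on $\rho_S$, which is precisely the marginal $p_S$. This establishes both assertions of the observation simultaneously.

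I expect no substantive obstacle here: the statement is essentially the compatibility of marginalizing outcomes with tracing out subsystems. The only care needed is bookkeeping the tensor factors, ensuring the completeness relation is applied to precisely the qubits outside $S$ and that the partial-trace identity is used in the correct direction; both are routine once the full outcome probability is written as a single trace.
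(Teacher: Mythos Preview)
Your proof is correct and is exactly the standard argument one would give: marginalize the full outcome distribution by summing over the unobserved indices, use POVM completeness $\sum_{s_j}M_{j,s_j}=I$ on each qubit $j\notin S$, and then apply $\tr[\rho\,(A_S\otimes I_{\bar S})]=\tr[\rho_S\,A_S]$. The paper itself does not supply a proof of this observation; it is stated as a self-evident fact (``the key observation is that measuring each qubit independently preserves the local structure'') and used immediately, so your write-up simply makes explicit what the paper leaves implicit.
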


By Algorithm 1, we repeat the measurement $\mathcal{M}_1\otimes\mathcal{M}_2\otimes\cdots\otimes\mathcal{M}_n$ $m$ times to obtain $m$ samples of $p_{1,2,\cdots,n}$. For any sets $S_1,S_2,\cdots,S_M\subseteq\{1,2,\cdots,n\}$, Obervation 3 implies that we have $m$ samples of the marginal distributions $p_{S_1}$, $p_{S_2}\cdots$, and $p_{S_M}$, not independent. Although, $m$ samples maybe not enough to recover $p_{1,2,\cdots,n}$, but enough to recover $p_{S_i}$ for each $S_i$ with high accuracy and high confidence using Chernoff bound. According to Observation 1 and Observation 2, this implies one can recover $\rho_{S_i}$ with high accuracy, trace distance error $\epsilon$, and sucessful probability at least $1-\delta'$ for each $S_i$ using $m=\mathcal{O}(g(k)\cdot\frac{\log \delta'}{\epsilon^2})$.

To successfully recover $\rho_{S_i}$ simultaneously, we only need the recovery of each $\rho_{S_i}$ with high accuracy and probability at least $1-\frac{\delta}{{{n}\choose{k}}}$ according to union bound.

Therefore, we only need $m=\mathcal{O}(g(k)\cdot\frac{\log {{{n}\choose{k}}}/{\delta}}{\epsilon^2})$ copies.
\subsection{Joint Measurement Lower bound}

In this subsection, we show that $\Omega(\frac{\log(n/\delta)}{\epsilon^2})$ copies are necessary to quantum overlapping tomography by proving that: $\Omega(\frac{\log(n/\delta)}{\epsilon^2})$ copies are necessary for quantum overlapping tomography with $k=1$, even if general \textit{joint measurements} are used.

Because trace distance is non-increasing according to partial trace, we can conclude that any measurement scheme can solve the quantum overlapping tomography problem for $k\geq 1$, automatically, it solves the case that $k=1$. Moreover, to deal with general joint measurement schemes, we focus on classical distributions. 

We first consider a simple question: Given a binary random variable $X$ obeys distribution $q_0=(1/2-\epsilon,1/2+\epsilon)$ or $q_1=(1/2+\epsilon,1/2-\epsilon)$. The goal is to find out which distribution is the true distribution. It is widely known that: For any fixed $m$ as the number of tossing this coin, the best strategy is to toss the coin $m$ times and declare the index ($0$ or $1$) that appears less.

Let the $X_1,X_2,\dots,X_m$ be the $m$ samples of $q_1$ and any $0\leq t\leq 2m\epsilon$, Ref.~\cite{Mousavi2016} proved the following:
\begin{align}
\mathrm{Pr}(\sum_{i=1}^m X_i>t+m(1/2-\epsilon))\geq \frac{1}{4}\cdot \exp(-\frac{2t^2}{m(1/2-\epsilon)})
\end{align}
By choosing $t=m\epsilon$, $\mathrm{Pr}(\sum_{i=1}^m X_i>t+m(1/2-\epsilon))=\mathrm{Pr}(\sum_{i=1}^m X_i>m/2)$ is the probability of answering $q_0$, a lower bound of the failure probability.

To success with probability at least $1-\delta'$, we must have
\begin{align}
\delta'\geq\frac{1}{4}\cdot \exp(-\frac{2m\epsilon^2}{1/2-\epsilon})
\end{align}
That is, $\frac{1-2\epsilon}{4\epsilon^2}\log(\frac{1}{4\delta'})$ samples are needed to distinguish $q_0$ and $q_1$ with confidence at least $1-\delta'$. 

Back to our problem of showing $\Omega(\frac{\log(n/\delta)}{\epsilon^2})$ samples of $n$-qubit state $\rho$ are necessary to solve the quantum overlapping tomography for $k\geq 1$, to within additive error $\epsilon$ and confidence at least $1-\delta$, we consider the classical distributions $p_{i_1,i_2,\cdots,i_n}=q_{i_1}\otimes q_{i_2}\otimes\cdots \otimes q_{i_n}$ where $q_0=(1/2-\epsilon,1/2+\epsilon)$ and $q_1=(1/2+\epsilon,1/2-\epsilon)$. In total, there are $2^n$ different distributions.

Suppose there is a quantum procedure $\mathcal{A}$ which uses $m$ copies of $\rho$ to accomplish the quantum overlapping tomography with probability at least $1-\delta$. 

Let $Z_1,Z_2,\cdots,Z_n$ be random variables which obey uniform binary distribution.
Choose each $p_{Z_1,Z_2,\cdots,Z_n}$ with probability $1/2^n$, and apply $\mathcal{A}$ on $m$ copies of $p_{Z_1,Z_2,\cdots,Z_n}$. Because the $\ell_1$ norm is non-increasing under partial trace, we know that according to the output of $\mathcal{A}$, we can sucessfully recover the indices $Z_1,Z_2,\cdots,Z_n$ with probability at least $1-\delta$.

In the following, we first observe that quantum procedure does not help in recovering $Z_1,Z_2,\cdots,Z_n$ from samples of $p_{Z_1,Z_2,\cdots,Z_n}$.
We assume the joint measurement $(M_{0,0,\cdots,0},\cdots,M_{1,1,\cdots,1})$ applied on $m$ copies (samples) of $p$ such that the measurement outcome $M_{i_1,i_2,\cdots,i_n}$ allows us to answer $Z_1=i_1,Z_2=i_2,\cdots,Z_n=i_n$. Here $M_{i_1,i_2,\cdots,i_n}$s are $2^{mn}\times 2^{mn}$ matrices.

We first observe that $p_{Z_1,Z_2,\cdots,Z_n}$s are all diagonal, so are  $p_{Z_1,Z_2,\cdots,Z_n}^{\otimes m}$.
Then, the off diagonal elements of $M_{i_1,i_2,\cdots,i_n}$ has no effect for this task.
Therefore, we only need to consider the procedure in the following two steps: At the first step, measure $m$ copies of $p_{Z_1,Z_2,\cdots,Z_n}$s in the diagonal basis; at the second step, output according to some probability distributions.

The first step ensures that we only need to measure each copy of $p_{Z_1,Z_2,\cdots,Z_n}$s in the diagonal basis, as there is no difference. By the convexity of the successful probability, we know that deterministic function works best in the second step, that is declare the index ($0$ or $1$) that appears less for each $1\leq j\leq n$.

Now, we assume the output random virable is $Y_1,Y_2,\cdots,Y_n$, our goal is
\begin{align}
\mathrm{Pr}(Y_1=Z_1,Y_2=Z_2,\cdots,Y_n=Z_n)\geq 1-\delta.
\end{align}
By Bayes' theorem, we know that
\begin{align*}
&\mathrm{Pr}(Y_1=Z_1,Y_2=Z_2,\cdots,Y_n=Z_n)\\
=&\mathrm{Pr}(Y_1=Z_1)\times \mathrm{Pr}(Y_2=Z_2|Y_1=Z_1)\times\cdots\times \mathrm{Pr}(Y_n=Z_n|Y_1=Z_1,\cdots,Y_{n-1}=Z_{n-1})\\
\leq& (1-\delta')\cdot (1-\delta')\cdot\cdots\cdot(1-\delta')\\
=& (1-\delta')^m,
\end{align*}
where we use the fact that $p_{Z_1,Z_2,\cdots,Z_n}$s are all in tensor product form, therefore, $\mathrm{Pr}(Y_2=Z_2|Y_1=Z_1)=\mathrm{Pr}(Y_2=Z_2)$, and so on.
$(1-\delta')$ denotes the sucessful probability of discriminate $q_0$ and $q_1$ with $m$ copies.

Therefore, we require that 
\begin{align*}
(1-\delta')^n>1-\delta.
\end{align*}
That is $\delta'=\Theta(\frac{\delta}{n})$, this implies the bound $m=\Omega(\frac{\log(n/\delta)}{\epsilon^2})$.

\subsection{Pauli Measurement Upper bound}
In this subsection, we analyze Algorithm 1 using the measurement scheme $\{M_0,M_1,M_2,M_3,M_4,M_5\}^{\otimes n}$ as a special case, where
\begin{align*}
M_0=\frac{\sigma_I+\sigma_X}{6}, \ \ \ \ \ M_1=\frac{\sigma_I-\sigma_X}{6},\\
M_2=\frac{\sigma_I+\sigma_Y}{6},\ \ \ \ \ M_3=\frac{\sigma_I-\sigma_Y}{6},\\
M_4=\frac{\sigma_I+\sigma_Z}{6},\ \ \ \ \ M_5=\frac{\sigma_I-\sigma_Z}{6}.
\end{align*}
This can be regarded as a random chosen of $P\in\{\sigma_X,\sigma_Y,\sigma_Z\}^{\otimes}$ and measure $\rho$ in the basis of $P$, and repeat it $m$ times. That is, 

 \begin{algorithm}
Repeat the following measurement $32\cdot 10^k\cdot\epsilon^{-2}\cdot\log(2{{n}\choose{k}}/\delta)$ times\;
For $i=1$ to $n$:
measure the $i$-th qubit in a random chosen basis from$\{\sigma_X,\sigma_Y,\sigma_Z\}$\;
\caption{Quantum Overlapping Tomography by Pauli Measurements}
\end{algorithm}

We observe the following, 
\begin{observation}
For any $S=\{i_1,i_2,\cdots,i_k\}\subseteq\{1,2,\cdots,n\}$, 
with probability at least $1-\frac{2^m}{e^m}\cdot 3^k$, each $P\in \{\sigma_X,\sigma_Y,\sigma_Z\}^{\otimes k}$ was measured for $\rho_S$ at least $m=16\cdot \frac{10^k\cdot\log(2{{n}\choose{k}}/\delta)}{3^k\cdot\epsilon^{2}}$ times.
\end{observation}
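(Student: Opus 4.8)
The plan is to fix the subset $S$ of size $k$ and reduce the claim to a union bound over the $3^k$ possible Pauli settings on $S$, each of which is controlled by a single multiplicative Chernoff tail. First I would observe that in each of the $N = 32\cdot 10^k\cdot\epsilon^{-2}\cdot\log(2\binom{n}{k}/\delta)$ rounds of the algorithm the basis chosen on the $k$ qubits of $S$ is, by construction, a uniformly random element of $\{\sigma_X,\sigma_Y,\sigma_Z\}^{\otimes k}$: the bases on the $n$ qubits are chosen independently and each uniformly over three options, so their restriction to $S$ is uniform over the $3^k$ tuples, and these restricted settings are independent across rounds. Hence for a fixed target $P\in\{\sigma_X,\sigma_Y,\sigma_Z\}^{\otimes k}$ the number $N_P$ of rounds whose setting on $S$ equals $P$ is a binomial random variable $\mathrm{Bin}(N,3^{-k})$.

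Next I would compute the mean. By the choice of $N$ we have $\mathbb{E}N_P = N\cdot 3^{-k} = 2m$, where $m = 16\cdot 10^k\cdot\epsilon^{-2}\cdot\log(2\binom{n}{k}/\delta)/3^k$; indeed $2m\cdot 3^k = N$. The event that $P$ is measured fewer than $m$ times on $\rho_S$, namely $\{N_P < m\}$, is contained in the lower-tail event $\{N_P \le m\} = \{N_P \le \tfrac12\mathbb{E}N_P\}$, so it suffices to bound the probability that the count drops to at most half its mean.

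I would then apply the multiplicative Chernoff bound in the sharp form $\Pr[N_P\le a]\le e^{-\mu}(e\mu/a)^{a}$, valid for $a<\mu=\mathbb{E}N_P$. Plugging in $a=m$ and $\mu=2m$ collapses the right-hand side neatly to $e^{-2m}(2e)^{m}=({2}/{e})^{m}=2^m/e^m$, which is exactly the per-setting failure probability appearing in the statement. Finally a union bound over the $3^k$ choices of $P$ gives total failure probability at most $3^k\cdot 2^m/e^m$, so every $P$ is seen at least $m$ times with probability at least $1-\frac{2^m}{e^m}\cdot 3^k$, as claimed.

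I expect no serious obstacle here: the only point requiring care is selecting the Chernoff tail in precisely the form that produces $2^m/e^m$ rather than a looser $e^{-m/4}$-type estimate; any cruder bound would still prove a statement of the same shape but with a different constant, so matching the exact expression in the Observation is the one place I would slow down. It is also worth noting that the $N_P$ for different $P$ are not independent (they sum to $N$), but since I only combine them through a union bound this dependence is harmless.
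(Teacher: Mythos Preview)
Your argument is correct and arrives at exactly the paper's bound, but by a cleaner route. The paper proves the same per-setting tail $\Pr[N_P\le m]\le (2/e)^m$ by an explicit computation: it writes out the binomial sum $\sum_{i=0}^m\binom{2m\cdot 3^k}{i}(1-3^{-k})^{2m\cdot 3^k-i}3^{-ki}$, shows that consecutive terms grow by a factor of at least $2$ so the whole sum is at most twice its last term, and then applies Stirling's approximation to that last term to extract $(2/e)^m$. Both proofs finish identically with the union bound over the $3^k$ settings.

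Your use of the multiplicative Chernoff tail $\Pr[X\le a]\le e^{-\mu}(e\mu/a)^a$ with $a=m$, $\mu=2m$ collapses directly to $(2/e)^m$ and bypasses both the geometric-ratio trick and Stirling; in effect the paper is re-deriving that Chernoff inequality by hand in this particular instance. The paper's route is more self-contained but longer; yours is shorter and makes the dependence on the ratio $\mu/a=2$ transparent.
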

\begin{proof}
It is direct to observe that for any $P\in \{\sigma_X,\sigma_Y,\sigma_Z\}^{\otimes k}$, it was chosen as a basis to measure is no more than $m$ times with probability at most
\begin{align*}
&\sum_{i=0}^m {{2m\cdot 3^k}\choose{i}}(1-\frac{1}{3^k})^{2m\cdot 3^k-i} \frac{1}{3^{ki}}\\
=&(1-\frac{1}{3^k})^{2m\cdot 3^k}\cdot\sum_{i=0}^m {{2m\cdot 3^k}\choose{i}}(1-\frac{1}{3^k})^{-i} \frac{1}{3^{ki}}\\
= & (1-\frac{1}{3^k})^{2m\cdot 3^k}\cdot\sum_{i=0}^m {{2m\cdot 3^k}\choose{i}} \frac{1}{(3^{k}-1)^i}
\end{align*}
The ratio of the $i+1$-th term and the $i$-th term is 
\begin{align*}
\frac{1}{3^k}\cdot\frac{2m\cdot 3^k-i}{i+1}\geq 2
\end{align*}
Therefore, the serie grows faster than a geometric progression with common ratio $2$. Then,
\begin{align*}
&(1-\frac{1}{3^k})^{2m\cdot 3^k}\cdot\sum_{i=0}^m {{2m\cdot 3^k}\choose{i}} \frac{1}{(3^{k}-1)^i}\\
\leq &2\cdot(1-\frac{1}{3^k})^{2m\cdot 3^k}{{2m\cdot 3^k}\choose{m}} \frac{1}{(3^{k}-1)^m}\\
\leq & 3\sqrt{\frac{2\cdot 3^k}{2\pi\cdot m\cdot (2\cdot 3^k-1)}}[2(1-\frac{1}{2\cdot 3^k-1})^{2\cdot 3^k-1}]^m\\
\leq & 3\sqrt{\frac{2\cdot 3^k}{2\pi\cdot m\cdot (2\cdot 3^k-1)}}\frac{2^m}{e^m}\\
\leq &\frac{2^m}{e^m}.
\end{align*}
where we use the bound that
\begin{align*}
\sqrt{2\pi n} \cdot n^n\cdot e^{-n}\cdot e^{\frac{1}{12n}}< n!<\sqrt{2\pi n} \cdot n^n\cdot e^{-n}\cdot e^{\frac{1}{12n+1}}
\end{align*}
and the number series
\begin{align*}
(1-\frac{1}{d})^d
\end{align*}
is increasing and goes to $\frac{1}{e}$.

Then, by union bound, we know that for any $S$ of size $k$,with probability at least $1-\frac{2^m}{e^m}\cdot 3^k$, each $P\in \{\sigma_X,\sigma_Y,\sigma_Z\}^{\otimes k}$ was measured for $\rho_S$ at least $m$ times.
\end{proof}

According to Theorem 1, the tomography of $\rho_S$ with trace distance error $\epsilon$ was successful with probability at least 
\begin{align*}
1- [3^k\cdot \frac{2^m}{e^m}+\frac{\delta}{2{{n}\choose{k}}}]> 1-\frac{\delta}{{{n}\choose{k}}},
\end{align*}
the last inequality follows from
\begin{align*}
&3^k\cdot \frac{2^m}{e^m}\\
< & 3^k (\frac{2}{e})^{12\cdot 3^k}\cdot (\frac{2}{e})^{4\cdot 3^k\cdot \log(2{{n}\choose{k}}/\delta)}\\
< & 3^k (\frac{2}{e})^{12\cdot k}\cdot (\frac{1}{e})^{ \log(2{{n}\choose{k}}/\delta)}\\
<& \frac{\delta}{2{{n}\choose{k}}}.
\end{align*}
By union bound, the quantum overlapping tomography was with trace distance error $\epsilon$ was successful with probability at least 
\begin{align*}
1- {{n}\choose{k}}\cdot \frac{\delta}{{{n}\choose{k}}}=1-\delta
\end{align*}

\section{Discussion}

For the general quantum state tomography problem, our measurement scheme is much more efficient than previous schemes in the Pauli measurements setting. Our result raises an important question on the
performance of {\em local} measurements, in particular Pauli measurements, in which each qubit
is measured at a time.

The sample complexity of the quantum overlapping tomography problem is
nearly resolved here up to a constant factor. 

An independent work~\cite{Huang_2020} 
analyzes random Clifford measurement and random Pauli measurement
and proves that it only requires $\mathcal{O}(\frac{\log(M)\cdot s}{\epsilon^2})$ copies
to achieve $\epsilon$ accuracy estimation for $\tr(O_1\rho),\cdots,\tr(O_M\rho)$, a more general question, where $s$ depends on the structure of $O_i$s. They also show that this is tight if only single-copy measurements are allowed. If $O_i$s are $k$-qubit Paulis, $\mathcal{O}(\frac{\log(M/\delta)\cdot 3^k}{\epsilon^2})$ which coinsides with the bound provided in Ref.~\cite{evans2019scalable}. 
These upper bound does not cover ours, because a $4^k$ coefficient would be added when we move from Pauli estimation to state tomography, thus, these results would lead to a $\mathcal{O}(\frac{\log(M/\delta)\cdot 12^k}{\epsilon^2})$ measurement scheme. 
Also, this lower bound does not imply our lower bound, for the specific quantum overlapping tomography problem, because this bound is not for joint measurement and does not consider the successful probability parameter.

\section{Acknowledgments}
We thank Steve Flammia for the discussion on quantum overlapping tomography. We thank Richard Keung
for clasifying their bound in \cite{Gu__2020}.
This work was supported by DE180100156.

\bibliography{opt-tomo}

\end{document}